\newtheorem{remark}{Remark}
\newtheorem{theorem}{Theorem}
\newtheorem{proof}{Proof}
\newtheorem{prot}{Protocol}      
\newcommand{\BPR}{\begin{prot}}   \newcommand{\EPR}{\end{prot}}
\newtheorem{definition}{Definition}
\def\BibTeX{{\rm B\kern-.05em{\sc i\kern-.025em b}\kern-.08em
    T\kern-.1667em\lower.7ex\hbox{E}\kern-.125emX}}
\begin{document}
\title{{ Blockchain-Envisioned Post-Quantum Secure Sanitizable Signature for Audit Logs Management
		
}}

\author{ Vikas Srivastava,  Paresh Baidya, Sihem Mesnager, Debasish Roy, Sumit Kumar Debnath

	\thanks{Vikas Srivastava is with the ``Department of Mathematics, National Institute of Technology Jamshedpur, Jamshedpur 831 014, India (e-mail: vikas.math123@gmail.com).'' }
			
	\thanks{ Paresh Baidya is with the ``Department of Mathematics, National Institute of Technology Jamshedpur, Jamshedpur 831 014, India.'' }
	
	\thanks{ Debasish Roy is with the ``CER, Department of Mathematics, Indian Institute of Technology Kharagpur, Kharagpur, 721 302, India (email: debasish.roy@maths.iitkgp.ac.in)'' }
	
	\thanks{Sumit Kumar Debnath is with the ``Department of Mathematics, National Institute of Technology, Jamshedpur 831 014, India (e-mail: sd.iitkgp@gmail.com, sdebnath.math@nitjsr.ac.in).'' }
	
	\thanks{Sihem Mesnager is with the ``Department of Mathematics, University of Paris VIII, F-93526 Saint-Denis; University Sorbonne Paris Cit\'e, LAGA, UMR 7539, CNRS, 93430 Villetaneuse,  and T\'el\'ecom Paris, 91120 Palaiseau, France; ''}
}

\maketitle

\begin{abstract}
Audit logs are one of the most important tools for transparently tracking system events and maintaining continuous oversight in corporate organizations and enterprise business systems. There are many cases where the audit logs contain sensitive data, or the audit logs are enormous. In these situations, dealing with a subset of the data is more practical than the entire data set. To provide a secure solution to handle these issues, a sanitizable signature scheme (SSS) is a viable cryptographic primitive. Herein, we first present the \textit{first} post-quantum secure multivariate-based SSS, namely {\sf Mul-SAN}. Our proposed design provides unforgeability, privacy, immutability, signer accountability, and sanitizer accountability under the assumption that the $MQ$ problem is NP-hard. {\sf Mul-SAN} is very efficient and only requires computing field multiplications and additions over a finite field for its implementation. {\sf Mul-SAN} presents itself as a practical method to partially delegate control of the authenticated data in avenues like the healthcare industry and government organizations. We also explore using Blockchain to provide a tamper-proof and robust audit log mechanism.
\end{abstract}

\section{Introduction}

Audit logs serve as a vital tool for transparently tracking system events. In addition, it serves as a method for keeping an oversight in corporate organizations and enterprise businesses\cite{ahmad2018towards}. These logs are essential for data auditing, ensuring compliance, and meeting regulatory requirements. They enable stakeholders to monitor system status, observe user actions, and ensure accountability for user roles and performance through secure documentation. Additionally, they are frequently relied upon to restore data to a previous state in cases of unauthorized alterations. However, the audit logs mechanism needs to be fixed, which we discuss below.

\noindent \subsubsection*{Motivation}
\begin{itemize}
	\item Involvement of a trusted third party (for example, refer \cite{liu2022id,han2022survey}), make audit logs vulnerable to a single point of failure. Therefore, there is a secure and efficient audit log mechanism that can be tampered with or manipulated. 
	\item Auditing logs may contain sensitive information. Moreover, audit logs are usually very huge in size. Therefore, in practical situations, it is more realistic for auditors to work with a portion of the audit logs rather than the complete set. Redacting/Sanitizing all of the data not included in the subset is the same as taking a subset of the data. Redaction can, therefore, be used to establish a connection between the integrity of the master data and the integrity of a subset. To facilitate privacy-aware audit log management, a flexible redaction method that can guarantee the integrity of the redacted data will be a helpful tool.

	\item Existing state-of-the-art audit integrity cryptographic protocols\cite{liu2022id,su2020publicly,hahn2020enabling,zhang2023cl,zhang2023owl,wang2019lightweight,balasubramanian2019cloud,garg2020efficient,shen2019data} are based on number-theoretic hardness assumption. Once a large-scale quantum computer is available, these designs will be broken. Hence, there is a need for alternative cryptographic designs for secure audit log mechanisms that are resistant to attacks by quantum computers. Post-Quantum Cryptography (PQC) \cite{ding2017current,srivastava2023overview,bernstein2017post,micciancio2009lattice} is one such direction. Therefore, there is a need to design an appropriate PQC protocol for maintaining audit integrity.
	
	\end{itemize}

Conventional digital signature schemes are cryptographic primitives traditionally used to provide end-to-end message integrity and authenticity. However, digital signatures prohibit any modification in the original document. If the contents of the original document are altered, the signature becomes invalid. Nevertheless, there are a lot of real-life situations where the alteration of a signed document by an external party is needed. For instance, the court magistrate may want to hand over the right to summon someone to his office staff. A sanitizable signature scheme (SSS) can be employed in these situations. An SSS enables the signer to partially hand over the signing right to some trusted external party, called the sanitizer (censor). The idea and design of SSS were conceptualized by Atinese et al. \cite{ateniese2005sanitizable}. The general schematic idea behind an SSS is summarized in Figure \ref{blockdiag}.
Over the last decade, many SSS \cite{bossuat2021unlinkable,bultel2019efficient,lai2016efficient,fleischhacker2018efficient,schroder2019efficient,samelin2020policy} have emerged, but the security of almost all of the currently used SSS relies on number theoretic challenging problems. There has yet to be any construction of multivariate-based SSS in the current state of the art. There has yet to be construction of post-quantum secure SSS. This indicates the requirement of designing a post-quantum secure and efficient SSS.

\begin{figure*}[h!]
	\centering
	\includegraphics[scale=0.3]{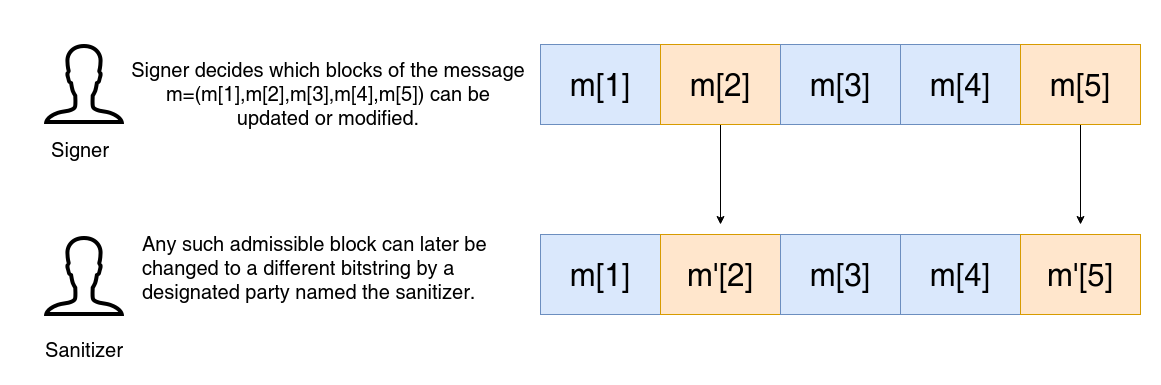}
	\caption{Sanitizable signature scheme. Fixed parts of the message are highlighted in blue.}
	\label{blockdiag}
\end{figure*}

\subsection{Our Contribution}
The contribution of this paper are summarised below
\begin{itemize}
\item In this paper, we put forth the \textit{first} post-quantum secure multivariate-based SSS. Our proposed construction {\sf Mul-SAN} consist of six algorithms: {\sf Mul-SAN}=({\sf Mul-SAN.KGen-Sign, Mul-SAN.KGen-Sanit, Mul-SAN.Signature, Mul-SAN.Sanitization, Mul-SAN.Verification, Mul-SAN.Judge}). On inputting a security parameter $\kappa$, the signer employs the algorithm {\sf Mul-SAN.KGen-Sign} to generate his key pair $({\sf sk_{sign}, pk_{sign}})$. Similarly, the sanitizer also produces his private key-public key pair $({\sf sk_{sanit}, pk_{sanit}})$ by running {\sf Mul-SAN.KGen-Sanit}. To allow modification in the original document/message {\sf msg} by a trusted censor, the signer decides {\sf AD}, a deterministic algorithm containing the description of the changeable message blocks and those fixed. In the following, signers first sign the fixed part of the document producing $\sigma_1$ and then the entire message producing $\sigma_2$. The signer finally outputs the message-signature pair {\sf (msg, $\sigma=(\sigma_1,\sigma_2, {\sf AD})$)}. A trusted sanitizer by employing the algorithm {\sf Mul-SAN.Sanitization} first make the modifications (using {\sf MODIFY}) producing the sanitized message {\sf msg'}, and then replaces the signature part $\sigma_2$ with a new sanitized signature generated under the ${\sf pk_{sanit}}$ (but leaving signature on the fixed part of the message unchanged).
	A verifier verifies the signature $\sigma$ for a given message ${\sf msg}$ by using the algorithm {\sf Mul-SAN.Verification}. In particular, the verifier first extracts the fixed part of {\sf msg}. In the following, the verifier inspects whether {\sf MODIFY} is admissible according to the {\sf AD}. Subsequently, the verifier checks the validity of the first part of signature ${\sigma}_1$ under the public key of the signer and then checks whether $\sigma_2$ is verified under the public key of the sanitizer or the signer; the algorithm {\sf Mul-SAN.Judge} can be employed later to derive the message-signature pair's origin.
	
	\item Our proposed design provides unforgeability, privacy, immutability, signer accountability and sanitizer accountability under the assumption that the $MQ$ problem is NP-hard. {\sf Mul-SAN} belongs to the family of MPKCs. Hence, it is naturally very efficient and only requires implementing field multiplications and additions over a finite field. Since {\sf Mul-SAN} 's security is based on the MQ-problem's intractability, our proposed design is the \textit{first} among SSS to offer strong resistance against quantum computers.
	
	\item We utilize the proposed Sanitizable signature and Blockchain technology to explore a possible solution for the audit log mechanism.
\end{itemize}

\section{Preliminaries \label{prelim}}

\subsection{Multivariate Signature Scheme}

A generic multivariate signature scheme comprises three algorithms: ({\sf Kg, Sig, Ver}).

\begin{description}
	\item {\sf ${\sf pub_k,sec_k \leftarrow Kg(\kappa}$)}: In the key generation phase, given a security parameter $\kappa$ a quadratic map $\mathcal{F}:\mathbb{F}_q^n\rightarrow \mathbb{F}_q^m$ with $n$ variables and $m$ equations is chosen in such a way that finding pre-image of some element under this map is easy to compute. Here $\mathbb{F}_q$ denotes the finite field of order $q$. Additionally two invertible affine transformations $\mathcal{S}_1 : \mathbb{F}_q^m\rightarrow \mathbb{F}_q^m$ and  $\mathcal{S}_2 : \mathbb{F}_q^n\rightarrow \mathbb{F}_q^n$ are picked to hide the structure of $\mathcal{F}$. The public key ${\sf pub_k}$ of the system is $\mathcal{S}_1\circ \mathcal{F}\circ\mathcal{S}_2$, while the secret key ${\sf sec_k}$ is given by the three tuple $\{\mathcal{S}_1, \mathcal{F},\mathcal{S}_2\}$ \vspace{0.2in}
	
	\item ${\sf x \leftarrow Sig(y, sec_k)}$: To sign a document $y\in \mathbb{F}_q^m$ under the secret key ${\sf sec_k}$, a signer proceeds in the following manner. The signer recursively calculates ${\sf x_0}=\mathcal{S}_1^{-1}({\sf y})\in \mathbb{F}_q^m,{\sf x_1}=\mathcal{F}_1^{-1}({\sf x_0})\in \mathbb{F}_q^n$ and ${\sf x}=\mathcal{S}_2^{-1}({\sf x_1})\in \mathbb{F}_q^n$. In the end, the signer outputs $(\sf x,y)$ as the signature-message pair.\vspace{0.2in}
	
	\item ${\sf 0/1 \leftarrow Ver(x, pub_k)}$: Given a signature ${\sf x}$ on a message ${\sf y}$, the verifier checks the validity of the equality ${\sf y}=\mathcal{P}({\sf x})$. If the equality holds, the signature is accepted. Otherwise, the verifier rejects it.
\end{description}

\subsection{Hardness Assumption \cite{garey1979computers}}

The security of our design depends on the hardness of the $MQ$ problem, which is formulated mathematically as:

\begin{definition} Given a system $\mathcal{R} = \left(r_{(1)}(\delta,\dots,\delta_n) ,\dots, r_{(k)}(\delta_1,\dots, \delta_n)\right)$ of $k$ quadratic equations {with each $r_{(i)}\in \mathbb{F}_q[\delta_1,\dots,\delta_n]$, find $(\bar{\delta}_1 ,\dots, \bar{\delta}_n) \in \mathbb{F}_q^n$} such that
	$$r_{(1)}\left(\bar{\delta}_1,\dots,\bar{\delta}_n \right) = \dots = r_{(k)}\left(\bar{\delta}_1,\dots,\bar{\delta}_n \right) = 0.$$
\end{definition}

\noindent We now describe the general construction of sanitizable signature scheme \cite{brzuska2009security}.
\subsection{Sanitizable Signature (SSS)} A generic  SSS is a collection of seven algorithms: SSS=({\sf KGen-Sign, KGen-Sanit, Signature, Sanitization, Verification, Judge}). The following definitions are taken from \cite{brzuska2009security}.

\begin{description}
	\item {\sf ${\sf (pk_{sign},sk_{sign})}\leftarrow$ KGen-Sign($1^\kappa$):} On input a security parameter $\kappa$, signer runs {\sf KGen-Sign} to produce its secret key ${\sf sk_{sign}}$ and the corresponding public key ${\sf pk_{sign}}$.\vspace{0.2in}
	\item {\sf ${\sf (pk_{sanit},sk_{sanit})}\leftarrow$ KGen-Sanit($1^\kappa$):} On input a security parameter $\kappa$, sanitizer runs {\sf KGen-Sanit} to generate its secret key ${\sf sk_{sanit}}$ and the corresponding public key ${\sf pk_{sanit}}$.\vspace{0.2in}
	
	\item $\sigma$ $\leftarrow${\sf Signature(${\sf msg, sk_{sign},pk_{sanit}, AD}$):} Given as input the secret key of the signer ${\sf  sk_{sign}}$ , public key of the sanitizer  ${\sf pk_{sanit}}$, and an admissible function ${\sf AD}$ , signer employs {\sf Signature } to output a signature $\sigma$ on a message {\sf msg} $\in \{0,1\}^*$. \vspace{0.2in}
	
	\item (${\sf msg'}, \sigma') \leftarrow$ {\sf Sanitization(${\sf msg,{\sf MODIFY},\sigma,pk_{sign},}\\{\sf sk_{sanit}}$):} Given a message ${\sf msg} \in \{0,1\}^*$ as input, a signature $\sigma$, the modification information ${\sf MODIFY}$, the public key ${\sf pk_{sign}}$ of the signer, and it's own secret key  ${\sf sk_{sanit}}$, a sanitizer first modifies the {\sf msg} according to the modification instruction and then generates the sanitized signature $\sigma'$ on the modified message ${\sf msg'}$ by putting in use {\sf Sanitization}. \vspace{0.2in}
	
	\item {\sf 0/1 $\leftarrow$ Verification(${\sf msg,\sigma, pk_{sign},pk_{sanit}}$):} On input  ${\sf pk_{sign}}$ and ${\sf pk_{sanit}}$, a verifier checks the validity of message-signature pair ({\sf msg}, $\sigma$) with respect to the public keys  ${\sf pk_{sign}}$ and ${\sf pk_{sanit}}$ by putting to use the algorithm {\sf Verification}. It produces {\sf true} as output if the signature is valid; otherwise, it rejects and outputs {\sf false}. \vspace{0.2in}

	\item {\sf org $\leftarrow$ Judge(${\sf msg,\sigma, pk_{sign},pk_{sanit}}$):} On input a message-signature pair ({\sf msg}, $\sigma$) and public keys of both parties, the algorithm {\sf Judge} outputs a decision {\sf org $\in$ \{{\sf sign, sanit}\}} indicating whether the signer or the sanitizer has created the message-signature pair.	\vspace{0.2in}
\end{description}

\begin{figure*}[ht]
	{	
		\label{ fig7}
		\begin{minipage}[b]{0.5\linewidth}
			
			\centering

			{\scriptsize

				${\sf EXP\mbox{-}Signer\mbox{-}Acc_{\mathcal{G}_{sign}}^{SSS}}$
				\begin{description}[noitemsep]
					\item[] ${\sf (pk_{sanit}, sk_{sanit})}\leftarrow {\sf KGen\mbox{-}Sanit(1^\kappa)}$
					\item[] $ ({\sf pk*_{sign}, {\sf msg^{*}}, \sigma^{*}})\leftarrow \mathcal{G}_{\sf sign}^{{\sf Sanitization(\cdot,\cdot, \cdot,sk_{sign})}}({\sf pk_{sanit}})$\newline
					letting $({\sf msg'}_{i}, {\sf \sigma'}_i, {\sf pk}_{{\sf sanit},i})$  $\forall i=1,2,\ldots, \Delta$ denote the queries and answers to and from oracle {\sf Sanitization}.
					\item[] return 1 if \newline (${\sf pk^*_{sign}}, {\sf msg^*})\neq ({\sf pk}_{{\sf sign},i}, {\sf msg'}_i)$ for all $i=1,2,\dots, q$  and \newline {\sf Verification(${\sf msg^*,\sigma^*,pk^*_{sign}, pk_{sanit}}$)=true} and \newline {\sf Judge(${\sf msg^*,\sigma^*, pk^*_{sig},pk_{sanit}}$)={\sf San}}
				\end{description}
				
			}
			\caption{Signer Accountability of SSS}
			\label{signeracc}
			\vspace{4ex}
	\end{minipage}}
	{
		\begin{minipage}[b]{0.5\linewidth}
			\centering
			{\scriptsize
				${\sf EXP\mbox{-}San\mbox{-}Acc_{\mathcal{G}_{sanit}}^{SSS}}$
				\begin{description}[noitemsep]
					\item[] ${\sf (pk_{sign}, sk_{sign})}\leftarrow {\sf KGen\mbox{-}Sign(1^\kappa)}$
					\item[] $ ({\sf pk*_{sanit}, {\sf msg^{*}}, \sigma^{*}})\leftarrow \mathcal{G}_{\sf sanit}^{{\sf Signature(\cdot,sk_{sign},\cdot,\cdot)}}({\sf pk_{sign}})$\newline
					letting $({\sf msg}_{i}, {\sf AD}_i, {\sf pk}_{{\sf sanit},i})$ and $\sigma_i$ for $i=1,2,\ldots, q$ denote the queries and answers to and from oracle {\sf Signature}.
					\item[] return 1 if \newline (${\sf pk_{sanit}^*}, {\sf msg^*})\neq ({\sf pk}_{{\sf sanit},i}, {\sf msg'}_j)$ for all $i=1,2,\dots, q$  and \newline {\sf Verification(${\sf msg^{*},\sigma^{*}, pk_{sign},pk*_{sanit}}$)=true} and \newline {\sf Judge(${\sf msg^{*},\sigma^{*}, pk_{sign},pk*_{sanit}}$)=true}
			\end{description}}
			\caption{Sanitizer Accountability of SSS}
			\label{sanacc}
			\vspace{4ex}
	\end{minipage}}
	{
		\begin{minipage}[b]{0.5\linewidth}
			
			\centering
			
			{\scriptsize
				${\sf EXP\mbox{-}Immutability_{\mathcal{G}}^{SSS}}$
				\begin{description}[noitemsep]
					\item[] ${\sf (pk_{sign}, sk_{sign})}\leftarrow {\sf KGen\mbox{-}Sign(1^\kappa)}$
					\item[] ${\sf (pk_{sanit}, sk_{sanit})}\leftarrow {\sf KGen\mbox{-}Sanit(1^\kappa)}$	
					\item[] $ ({\sf pk*_{sanit}, {\sf msg^{*}}, \sigma^{*}})\leftarrow \mathcal{G}^{{\sf Signature(\cdot,sk_{sign},\cdot,\cdot)}}({\sf pk_{sign}})$\newline
					letting $({\sf msg}_{i}, {\sf AD}_i, {\sf pk}_{{\sf sanit},i})$ and $\sigma_i$,  $i\in \{1,\ldots, \Delta\}$ denotes the queries to and responses from the oracle {\sf Signature}.
					\item[] return 1 if \newline {\sf Verification(${\sf msg^{*},\sigma^{*}, pk_{sign},pk*_{sanit}}$)=true} and \newline $\forall$ $i=1,\dots, \Delta$ we have \newline \hspace{0.5pt} ${\sf pk*_{sanit}\neq pk_{{sanit},i}}$ or \newline ${\sf msg^{*}\notin \{MODIFY(msg)\; |\; MODIFY \mbox{ with }}$ \\${\sf AD}_i({\sf MODIFY})=1\}$
			\end{description}}
			\caption{Immutability of SSS}
			\label{immut}
			\vspace{4ex}
	\end{minipage}}
	\vspace{2ex}
	{
		\begin{minipage}[b]{0.5\linewidth}
			
			\centering
			
			{\scriptsize				
				${\sf EXP\mbox{-}Unforgeability_{\mathcal{G}}^{SSS}}$
				
				\begin{description}[noitemsep]
					\item ${\sf (pk_{sign}, sk_{sign})}\leftarrow {\sf KGen\mbox{-}Sign(1^\kappa)}$
					\item ${\sf (pk_{sanit}, sk_{sanit})}\leftarrow {\sf KGen\mbox{-}Sanit(1^\kappa)}$	
					\item {\tiny $ ({\sf msg^*}, \sigma^*)\leftarrow \mathcal{G}^{{\sf Signature(\cdot,sk_{sign},\cdot,\cdot)},{\sf Sanitization(\cdot, \cdot, sk_{sanit},\cdot)}}({\sf pk_{sign}, pk_{sanit}})$}\newline
					letting $({\sf msg}_{i}, {\sf AD}_i, {\sf pk}_{{\sf sanit},i})$ and $\sigma_i$ with $i\in \{1,\ldots, \Delta\}$ denotes the queries to and responses from the oracle {\sf Signature} and $({\sf msg}_{j}, {\sf MODIFY}_j, \sigma_j,{\sf pk}_{{\sf sign},i})$ and $({\sf msg'_j},\sigma'_j)$ for $j\in \{\Delta+1,\ldots, r\}$ denote the queries to and responses from the oracle {\sf Sanitization}.
					\item return 1 if \newline {\sf Verification(${\sf msg^{*},\sigma^{*}, pk_{sign},pk*_{sanit}}$)=true} and \newline $\forall$ $i\in \{1,\ldots, q\}$ we have  (${\sf pk*_{sanit}}, {\sf msg^*})\neq ({\sf pk}_{{\sf sanit},i}, {\sf msg}_i)$ and \newline for all $j=q+1,\dots, r$ we have  (${\sf pk_{sign}}, {\sf msg^*})\neq ({\sf pk}_{{\sf sign},j}, {\sf msg'}_j)$
			\end{description}}
			\caption{Unforgeability of SSS}
			\label{unforge}
			\vspace{4ex}
	\end{minipage}}
\end{figure*}

We will now discuss the security properties of SSS. We will follow the discussion in \cite{brzuska2009security}.
\subsubsection{Unforgeability}
The security notion of unforgeability ensures that no adversary can produce a new valid signature on behalf of the signer or sanitizer.
\begin{definition}
	A SSS is unforgeable if given a PPT adversary $\mathcal{G}$, and the success probability of $\mathcal{ G}$ in the security game \ref{unforge} is negligible.
\end{definition}

\subsubsection{Immutability}  The security notion of immutability ensures that the sanitizer is only allowed to make changes in the admissible portions of the document.

\begin{definition}
	We say that an SSS is immutable if given any PPT adversary $\mathcal{G}$, the probability that the immutability experiment defined in Figure \ref{immut} returns $1$ is negligible.
\end{definition}

\subsubsection{Privacy} The security notion of privacy ensures that, aside from the signer and the sanitizer, no one can gain any information about the sanitized portions of the document.

\begin{definition}
	A SSS satisfies privacy if given a PPT adversary $\mathcal{G}$, the success probability of $\mathcal{ G}$ in the security game \ref{private} is negligible.
\end{definition}

\begin{figure*}
	\centering{\small
		${\sf EXP \mbox{-} Privacy_{\mathcal{G}}^{SSS}}$
		\begin{description}[]
			\item[] ${\sf (pk_{sign}, sk_{sign})}\leftarrow {\sf KGen\mbox{-}Sign(1^\kappa)}$
			\item[] ${\sf (pk_{sanit}, sk_{sanit})}\leftarrow {\sf KGen\mbox{-}Sanit(1^\kappa)}$	
			\item[] $b \leftarrow \{0,1\}$
			\item[] $a\leftarrow \mathcal{G}^{{\sf Signature(\cdot,sk_{sign},\cdot,\cdot)},{\sf Sanitization(\cdot, \cdot, sk_{sanit},\cdot)}, {\sf LoRSanit(\cdot,\cdot, \cdot,sk_{sign}, sk_{sanit},b)}}({\sf pk_{sign}, pk_{sanit}})$\newline
			where oracle ${\sf LoRSanit(\cdot,\cdot, \cdot,sk_{sign}, sk_{sanit},b)}$ on input $({\sf msg_{j,0 }}, {\sf Modify_{j,0}}), ({\sf msg_{j,1 }}, {\sf Modify_{j,0}},)$ and ${\sf AD}_j$ first computes $\sigma_{j, b}\leftarrow {\sf Signature({\sf msg_{j,b},sk_{sign}, pk_{sanit}, AD_j})}$ and then returns ${\sf (msg'_{j}, \sigma'_j)\leftarrow Sanitization(msg_{j,b}, MODIFY_{j,b}, \sigma_{j, b}, pk_{sign}, sk_{sanit})}$, and where $({\sf msg}_{j,0}, {\sf MODIFY}_{j,0}, {\sf AD}_j)\equiv({\sf msg}_{j,1}, {\sf MODIFY}_{j,1}, {\sf AD}_j)$
			\item[] return 1 if $a=b$
	\end{description}}
	\caption{Privacy of SSS}
	\label{private}
\end{figure*}
\subsubsection{ Accountability}

The security notion of accountability ensures that any dispute regarding the origin of the signature should be settled correctly. Accountability can be classified into two types: (i) sanitizer accountability and (ii) signer accountability. The former means that not even a corrupt sanitizer could make the judge wrongly accuse the signer. At the same time, the latter ensures that even a malicious signer is not allowed to accuse the sanitizer falsely.
\begin{definition}
	We say that a SSS is a signer (sanitizer) accountable, if given any PPT adversary $\mathcal{G}$, the probability that the immutability experiment defined in Figure \ref{signeracc}(Figure \ref{sanacc}) returns $1$ is negligible.
\end{definition}

\section{Proposed Multivariate-Based Sanitizable Signature Scheme{\sf Mul-SAN}) \label{proposed}}
{\bf A high-level overview:}  For the sake of brevity, we first introduce some notations. These notations are widely used in the literature \cite{brzuska2009security,bossuat2021unlinkable,brzuska2010unlinkability}. {\sf AD} and {\sf MODIFY} are descriptions of efficient algorithms such that

\begin{align*}
{\sf msg'}\leftarrow {\sf MODIFY(msg)}
\end{align*}
\[
{\sf AD(MODIFY) }=
\begin{cases}
1   & \mbox{ if modifications are valid \newline with respect  }\\ & \mbox{ to }  {\sf AD};\\
0 &   \text{ otherwise.}
\end{cases}
\]

In other words, the algorithm {\sf MODIFY} maps the original {\sf msg} to the modified one. ${\sf AD(MODIFY)}\in \{0,1\}$ indicates if the modification is admissible. Let ${\sf FIXED_{AD}}$ be an efficient procedure which is determined (uniquely) by ${\sf AD}$. ${\sf FIXED_{AD}}$ maps {\sf msg} to the fixed part of the message ${\sf msg_{FIX} = FIXED_{AD}(msg)}$. We assume that ${\sf FIXED_{AD}(msg')}\neq {\sf FIXED_{AD}(msg)}$ for ${\sf msg'} \in \{{\sf MODIFY(msg) }\; |\; {\sf MODIFY }\mbox{ with } {\sf AD(MODIFY)}=1\} $

We now describe the workflow of our proposed design in brief. We use a secure multivariate-based signature scheme as the fundamental building block of our design. Motivated by the work of Brzuska et al. \cite{brzuska2009security,brzuska2009sanitizable}, our construction {\sf Mul-SAN} consist of six algorithms:
{\sf Mul-SAN}=({\sf Mul-SAN.KGen-Sign, Mul-SAN.KGen-Sanit, Mul-SAN.Signature, Mul-SAN.Sanitization, Mul-SAN.Verification, Mul-SAN.Judge}). Our scheme does not need an explicit {\sf Proof} algorithm. On input, a security parameter $\kappa$, the signer (respective sanitizer) employs the algorithm {\sf Mul-SAN.KGen-Sign} ( respectively {\sf Mul-SAN.KGen-Sanit}) to generate the key pair $({\sf sk_{sign}, pk_{sign}})$ (respectively $({\sf sk_{sanit}, pk_{sanit}})$). The signer then chooses {\sf AD}, which contains the description of blocks of message {\sf msg }, which the sanitiser can modify.
In the next step, the signer first generates the signature $\sigma_1$ on the fixed part ${\sf msg_{fix}}$, and subsequently signs the entire message producing $\sigma_2$. The signer finally outputs the message-signature pair {\sf (msg, $\sigma=(\sigma_1,\sigma_2, {\sf AD})$)}. A trusted sanitizer uses the algorithm {\sf Mul-SAN.Sanitization} to produce the sanitized message {\sf msg'}. In the following, he replaces the signature part $\sigma_2$ with a new signature $\sigma_2^{'}$ generated on the modified message {\sf msg'}. During the sanitization process, the sanitizer leaves the signature $\sigma_1$ on the fixed part of the message unchanged. Given a message-signature pair $({\sf msg}, \sigma)$, a verifier employs the algorithm {\sf Mul-SAN.Verification} to first extract the fixed part of {\sf msg}. In the next step, the verifier checks whether
${\sf AD(MODIFY)}=1$. Following this, he first checks the validity of the first part of the signature ${\sigma}_1$ under the signer's public key. He later checks the validity of the second part of the signature $\sigma_2$ under the public keys of the sanitizer or the signer. On input a message-signature pair ({\sf msg}, $\sigma$) and public keys of both parties, the algorithm {\sf Mul-SAN.Judge} outputs a decision {\sf org $\in$ \{{\sf sign,sanit}\}} indicating whether the signer or the sanitizer has created the message-signature pair. We have inserted the bit $0$ and $1$ to indicate the difference between the fixed message and the entire message. We now present our design.

{\small\BPR{{\sf Mul-SAN}}{
		\begin{description}
			\item {\sf ${\sf (pk_{sign},sk_{sign})}\leftarrow$ Mul-SAN.KGen-Sign($\kappa$):} On input of the security parameter $\kappa$, the signer runs the key generation algorithm of the underlying secure multivariate-based signature scheme to generate a secret key  $\{\mathcal{S}, \mathcal{F}, \mathcal{T}\}$ and the corresponding public key $ \mathcal{P}=\mathcal{S}\circ \mathcal{F}\circ\mathcal{T}:\mathbb{F}_q^n\rightarrow \mathbb{F}_q^m$. Here $\mathcal{S}:\mathbb{F}_q^m\rightarrow \mathbb{F}_q^m$ and $\mathcal{T}:\mathbb{F}_q^n\rightarrow \mathbb{F}_q^n$ are two randomly chosen affine invertible transformation, and $\mathcal{F}:\mathbb{F}_q^n\rightarrow \mathbb{F}_q^m$ denotes the central map. It then sets $\{\mathcal{S}, \mathcal{F}, \mathcal{T}\}$ as his secret key ${\sf sk_{sign}}$, and $ \mathcal{P}$ as the corresponding public key ${\sf pk_{sign}}$.
			
			\item {\sf ${\sf (pk_{sanit},sk_{sanit})}\leftarrow$ Mul-SAN.KGen-Sanit($\kappa$):} On input of the security parameter $\kappa$, the sanitizer runs the key generation algorithm of the underlying secure multivariate-based signature scheme to generate a secret key $\{\mathcal{Q}, \mathcal{X}, \mathcal{Y}\}$ and the corresponding public key $\mathcal{R}=\mathcal{Q}\circ \mathcal{X}\circ\mathcal{Y}:\mathbb{F}_q^n\rightarrow \mathbb{F}_q^m$. Similar to the above algorithm, $\mathcal{Q}:\mathbb{F}_q^m\rightarrow \mathbb{F}_q^m$ and $\mathcal{Y}:\mathbb{F}_q^n\rightarrow \mathbb{F}_q^n$ are two randomly chosen affine invertible transformation, and $\mathcal{X}:\mathbb{F}_q^n\rightarrow \mathbb{F}_q^m$ denotes the central map. It then sets $\{\mathcal{Q}, \mathcal{X}, \mathcal{Y}\}$ as his secret key ${\sf sk_{sanit}}$, and $ \mathcal{R}$ as the corresponding public key ${\sf pk_{sanit}}$.
			
			\item $\sigma$ $\leftarrow${\sf Mul-SAN.Signature(${\sf msg, sk_{sign},pk_{sanit}, AD}$):} Given secret key of the signer ${\sf  sk_{sign}}$ , public key of the sanitizer  ${\sf pk_{sanit}}$, and an admissible function ${\sf AD}$ as input, signer outputs the signature $\sigma$ on a message {\sf msg} $\in \{0,1\}^*$ by executing the following steps:
			\begin{enumerate}
				\item Given the message {\sf msg}, the signer sets the fixed part of ${\sf msg}$ by employing the
				algorithm ${\sf Fixed_{AD}}$ determined by {\sf AD}. Let ${\sf msg_{fix}=Fixed_{AD}({\sf msg})}$ denote the fixed part of the original message ${\sf msg}$.
				\item Signer first signs the fixed part ${\sf msg_{fix}}$ by going through following steps:
				\begin{enumerate}
					\item Sets ${\sf msg_0=\mathcal{H}(0||{\sf msg_{fix}}||{AD}||{\sf pk_{sanit}})}$ where $\mathcal{H}$ is a  cryptographically secure hash function $\mathcal{H}:\{0,1\}^*\rightarrow \mathbb{F}^m$.
					\item  In the following, signer executes recursively $\alpha_0=\mathcal{S}^{-1}({\sf msg_0}), \beta_0= \mathcal{F}^{-1}(\alpha_0)$, and $\sigma_1=\mathcal{T}^{-1}(\beta_0)$.
					\item Sets ${\sf \sigma_1}$ as the signature on the fixed part of the message.
				\end{enumerate}
				\item In the following, signer signs the full message ${\sf msg}$ by executing the following steps:
				\begin{enumerate}
					\item Sets ${\sf msg_1=\mathcal{H}(1||msg||pk_{sanit}||pk_{sign})}$ where $\mathcal{H}$ is a  cryptographically secure hash function $\mathcal{H}:\{0,1\}^*\rightarrow \mathbb{F}^m$.
					\item  In the next step, signer executes recursively $\alpha_1=\mathcal{S}^{-1}({\sf msg_1}), \beta_1= \mathcal{F}^{-1}(\alpha_1)$ and $\sigma_2=\mathcal{T}^{-1}(\beta_1)$.
					\item Sets $\sigma_2$ as the signature on the full message.
				\end{enumerate}
				\item Outputs $\sigma=(\sigma_1, \sigma_2, {\sf AD})$ as the signature on a message {\sf msg}.
			\end{enumerate}

			\item (${\sf msg'}, \sigma') \leftarrow$ {\sf Mul-SAN.Sanitization(${\sf msg,{\sf MODIFY},\sigma,pk_{sign},sk_{sanit}}$):} Given a message ${\sf msg} \in \{0,1\}^*$, a signature $\sigma=(\sigma_1, \sigma_2, {\sf AD})$, the modification information ${\sf MODIFY}$, the public key ${\sf pk_{sign}}$ of the signer, and its own secret key ${\sf sk_{sanit}}$, sanitizer produces the sanitized signature $\sigma'$ by going through following steps:
			\begin{enumerate}
				\item Sanitizer first recovers the fixed part of the message ${\sf msg_{fix}=Fixed_{AD}({\sf msg})}$ . It then checks that {\sf MODIFY} is admissible according to the {\sf AD}. If the modification is admissible and matches ${\sf AD}$, he proceeds to the next step; otherwise, he aborts.
				\item Checks the validity of the signature $\sigma_1$ for message $(0, {\sf msg_{fix}, AD, pk_{sanit}})$ under the public key ${\sf pk_{sign}}$. If the signature is invalid, it stops and outputs $\perp$; otherwise, it proceeds to the next step.
				\item Uses the modification information to find the modified message ${\sf msg'\leftarrow MODIFY(msg)}$
				\item Generates the signature $\sigma'_2$ on the modified message {\sf msg'} in the following manner:
				\begin{enumerate}
					\item Sets ${\sf msg_2=\mathcal{H}(1||msg'||pk_{sanit}||pk_{sign})}$
					\item  In the following, signer executes recursively $\alpha_2=\mathcal{Q}^{-1}({\sf msg_2}), \beta_2= \mathcal{X}^{-1}(\alpha_2)$ and $\sigma'_2=\mathcal{Y}^{-1}(\beta_2)$.
				\end{enumerate}
				\item Outputs ${\sf msg'}$ together with $\sigma'=(\sigma_1,\sigma'_2,{\sf AD})$ as the sanitized message signature pair.
			\end{enumerate}

			\item {\sf 0/1 $\leftarrow$ Mul-SAN.Verification(${\sf msg,\sigma, pk_{sign},pk_{sanit}}$):} A verifier validates  a signature $\sigma=(\sigma_1, \sigma_2, {\sf AD})$ on the message ${\sf msg}$ w.r.t. ${\sf pk_{sign}}$ and ${\sf pk_{sanit}}$ in the following way:
			\begin{enumerate}
				\item Parses $\sigma = (\sigma_1 , \sigma_2 , {\sf AD} )$.
				\item Recovers the fixed part of the message ${\sf msg_{fix}=Fixed_{AD}({\sf msg})}$ . It then checks that {\sf MODIFY} is admissible according to the {\sf AD}.
				\item If $\mathcal{P}{\sf (\sigma_1)\stackrel{?}{=}msg_0}$ accepts $\sigma_1$ as the valid signature, otherwise abort.
				
				\item Checks that either of the equality  $\mathcal{P}(\sigma_2)\stackrel{?}{=}{\sf msg_1}$ or $ \mathcal{R}(\sigma_2)\stackrel{?}{=}{\sf msg_2}$ holds true. If {\sf true}, it publishes $1$, indicating the validity of the entire signature; otherwise, it rejects and outputs $0$.

			\end{enumerate}
			
			\item {\sf Mul-SAN.Judge(${\sf msg,\sigma, pk_{sign},pk_{sanit}}$):}  The judge on input $ {\sf msg,\sigma, pk_{sign},pk_{sanit}}$ decides the origin of the signature by executing following steps:
			
			\begin{enumerate}
				\item If $\mathcal{P}(\sigma_2)\stackrel{?}{=}{\sf msg}_1$ validates as {\sf true}, it outputs ${\sf Sig}$ indicating that signer generated the signature; else outputs ${\sf San}$ indicating that signature was generated by the sanitizer. We are assuming that the judge works on a valid message-signature pair.
			\end{enumerate}
		\end{description}
	}
	\EPR}

\noindent \begin{remark} The correctness of our design follows straightforwardly from the correctness of the underlying multivariate-based signature used as the building block.
\end{remark}

\section{Security Analysis\label{security}}
In this section, we will show that our proposed design {\sf Mul-SAN} is unforgeable, satisfies privacy and immutability, and offers both sanitizer accountability and signer accountability. We begin this section by proving that {\sf Mul-SAN} is immutably provided the underlying multivariate-based signature is unforgeable.
\begin{theorem}
	If the underlying MQ-based signature employed in the design and construction of {\sf Mul-SAN} is unforgeable, the proposed sanitizable signature scheme {\sf Mul-SAN} is immutable.
\end{theorem}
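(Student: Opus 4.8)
The plan is to prove the contrapositive by a black-box reduction: assuming a PPT adversary $\mathcal{G}$ that wins the immutability experiment of Figure \ref{immut} with non-negligible probability, I would build a forger $\mathcal{B}$ against the unforgeability of the underlying MQ-based signature. The forger $\mathcal{B}$ receives the challenge public key from its own challenger and embeds it as the signer's key ${\sf pk_{sign}} = \mathcal{P}$, then runs $\mathcal{G}$ on input ${\sf pk_{sign}}$. Since in the immutability game the adversary itself supplies the sanitizer public keys ${\sf pk}_{{\sf sanit},i}$ with each signing query, $\mathcal{B}$ needs no sanitizer secret. To answer a query $({\sf msg}_i, {\sf AD}_i, {\sf pk}_{{\sf sanit},i})$ to the {\sf Signature} oracle, $\mathcal{B}$ computes the fixed part ${\sf msg}_{{\sf fix},i} = {\sf Fixed}_{{\sf AD}_i}({\sf msg}_i)$, forms the digests ${\sf msg}_{0,i} = \mathcal{H}(0\|{\sf msg}_{{\sf fix},i}\|{\sf AD}_i\|{\sf pk}_{{\sf sanit},i})$ and ${\sf msg}_{1,i} = \mathcal{H}(1\|{\sf msg}_i\|{\sf pk}_{{\sf sanit},i}\|{\sf pk_{sign}})$, forwards both to its signing oracle to obtain $\sigma_{1,i}$ and $\sigma_{2,i}$, and returns $\sigma_i = (\sigma_{1,i}, \sigma_{2,i}, {\sf AD}_i)$. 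This perfectly reproduces the real signer, so $\mathcal{G}$'s view is identical to the genuine experiment.

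Next I would extract the forgery. When $\mathcal{G}$ halts with a winning tuple $({\sf pk}^*_{\sf sanit}, {\sf msg}^*, \sigma^*)$, where $\sigma^* = (\sigma^*_1, \sigma^*_2, {\sf AD}^*)$, the condition {\sf Verification}$(\cdots)={\sf true}$ forces $\sigma^*_1$ to verify under $\mathcal{P}$ against ${\sf msg}^*_0 = \mathcal{H}(0\|{\sf msg}^*_{\sf fix}\|{\sf AD}^*\|{\sf pk}^*_{\sf sanit})$, with ${\sf msg}^*_{\sf fix} = {\sf Fixed}_{{\sf AD}^*}({\sf msg}^*)$. The entire argument then reduces to showing that ${\sf msg}^*_0$ was never among the digests $\mathcal{B}$ submitted to its signing oracle; if so, $(\sigma^*_1, {\sf msg}^*_0)$ is a valid existential forgery and $\mathcal{B}$ wins. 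The crucial structural fact is that $\sigma_1$ is always produced by the signer and is copied unchanged through sanitization, so a valid $\sigma^*_1$ can only come from a genuine signer signature on the corresponding digest.

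The heart of the proof, and the step I expect to be the main obstacle, is deriving this freshness from the immutability winning predicate. That predicate guarantees that for every query $i$, either ${\sf pk}^*_{\sf sanit} \neq {\sf pk}_{{\sf sanit},i}$, or ${\sf msg}^* \notin \{{\sf MODIFY}({\sf msg}_i) \mid {\sf AD}_i({\sf MODIFY})=1\}$. I would split into cases. If ${\sf pk}^*_{\sf sanit} \neq {\sf pk}_{{\sf sanit},i}$, the hash preimages differ in their ${\sf pk}_{\sf sanit}$-component, so ${\sf msg}^*_0 \neq {\sf msg}_{0,i}$ barring an $\mathcal{H}$-collision. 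Otherwise ${\sf pk}^*_{\sf sanit} = {\sf pk}_{{\sf sanit},i}$ but ${\sf msg}^*$ is not an admissible modification of ${\sf msg}_i$; invoking the defining property of ${\sf Fixed}_{\sf AD}$ (the admissible modifications are exactly the messages preserving the fixed part, so a message outside that set has a different fixed part whenever the descriptions coincide), I obtain that either ${\sf AD}^* \neq {\sf AD}_i$ or ${\sf msg}^*_{\sf fix} \neq {\sf msg}_{{\sf fix},i}$; in both situations the preimages again differ, giving ${\sf msg}^*_0 \neq {\sf msg}_{0,i}$ up to a collision. Finally, owing to the domain-separating prefix bits $0$ and $1$, the digest ${\sf msg}^*_0$ also differs at the preimage level from every ${\sf msg}_{1,i}$, ruling out the full-message queries as well.

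To close, I would collect the negligible collision terms: by the collision resistance of the cryptographically secure $\mathcal{H}$, the probability that any of these distinct preimages collide under $\mathcal{H}$ is negligible, so with overwhelming probability ${\sf msg}^*_0$ is fresh. Hence $\mathcal{B}$'s forgery succeeds whenever $\mathcal{G}$ wins, and the advantage of $\mathcal{B}$ is at least that of $\mathcal{G}$ minus a negligible collision term. Since the underlying MQ-based signature is assumed unforgeable, $\mathcal{B}$'s advantage, and therefore $\mathcal{G}$'s winning probability, must be negligible, establishing the immutability of {\sf Mul-SAN}.
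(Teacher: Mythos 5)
Your proposal is correct and follows essentially the same reduction as the paper's proof: embed the challenge public key as ${\sf pk_{sign}}$, simulate the {\sf Signature} oracle via the challenger's signing oracle, output $\sigma_1^*$ on the fixed-part message as the forgery, and derive its freshness from the immutability winning predicate together with the defining property of ${\sf FIXED_{AD}}$ and the $0/1$ prefix. If anything, your write-up is slightly more careful than the paper's, which argues directly on the tuples $({\sf 0, msg^*_{fixed}, AD^*, pk^*_{sanit}})$ and leaves implicit the digest-level case analysis and the hash-collision terms that you account for explicitly.
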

\begin{proof}We assume on the contrary that our proposed design {\sf Mul-SAN} is not immutable. Suppose that there exists a polynomial-time adversary $\mathcal{G}$ such that the advantage $\delta(\kappa)=
	{\sf Prob}[{\sf EXP\mbox{-}Immutability_{\mathcal{G}}^{SSS}}=1]$ is non-negligible. We show how to build an adversary $\mathcal{B}$, which breaks the unforgeability of the underlying secure MQ-based signature scheme. Note that $\mathcal{B}$ has oracle access to a signing oracle of the underlying secure MQ-based signature.
	
	\begin{description}
		\item {$\mathcal{ B}'s$ construction}: $\mathcal{ B}$ receives the public key ${\sf pk_{sign}}$ and runs  $ ({\sf pk^*_{sanit}, {\sf msg^*}, \sigma^*})\leftarrow \mathcal{G}^{}({\sf pk_{sign}})$. During the experiment, ${\cal B}$ simulates the  oracle ${\sf Signature(\cdot, sk_{sign}, \cdot, \cdot)}$ to $\mathcal{G}$ as follows:
		\item {\sf Signature$(\cdot, {\sf sk_{sign}}, \cdot, \cdot)$}: On the $i$th input $({\sf msg}_{i}, {\sf AD}_i, {\sf pk}_{{\sf sanit},i})$, $\mathcal{ B}$ first computes the fixed message part ${\sf msg_{fixed}} \leftarrow {\sf  FIXED}_{{\sf AD}_i}({\sf msg}_i)$. It then sends $({\sf msg}_i, {\sf AD}_i, {\sf pk_{sign}}, {\sf pk_{sanit}}_i)$ to a signing oracle of the underlying secure MQ-based signature oracle and receives the signature $\sigma_{i,1}$. In the next step, it sends $({\sf msg}_{i}, {\sf pk}_{{\sf sanit},i}, {\sf pk}_{{\sf sign},i})$ to a signing oracle of the underlying MQ-based signature to get back the signature ${\sigma_{i,2}}$ on the full message. It returns $\sigma=(\sigma_{i,1}, \sigma_{i,2}, {\sf AD}_i)$.
	\end{description}
	
	Finally $\mathcal{ B}$ parses $\sigma^*=(\sigma_1^*,\sigma_2^*, {\sf AD}^*)$, ${\sf msg^*_{fixed}}={\sf FIXED_{AD^*}(msg^*)}$ and returns the tuple $({\sf msg^*_{fixed}}, {\sf AD^*, pk_{sign}, pk^*_{sanit}, \sigma_1^*})$.
	
	\begin{description}
		\item Analyze: We show the if {$\mathcal{ G}$} wins its experiment, then $\mathcal{ B}$ also wins its experiment. If $\mathcal{ G}$ wins its experiment it means $\mathcal{G}$ successfully outputs $({\sf msg^*, \sigma^*, pk^*_{sanit}})$ with  
		
		\begin{align}
		&{\sf Verification({\sf msg^{*},\sigma^{*}, pk_{sign},pk^*_{sanit}})=true}\label{imut1}\\
		&\mbox{and } \forall i=1,2, \dots q,\;\;	 {\sf pk^*_{sanit}\neq pk_{{sanit},i}}\label{imut2}\\
		&\mbox{or } {\sf msg^{*}\notin \{MODIFY(msg_{i})\; |\; MODIFY \mbox{ with }}\nonumber \\ &{\sf AD}_i({\sf MODIFY})=1\}\label{imut3}
		\end{align}
		By equation \ref{imut1}, $\sigma^*$ is a valid signature, generated during the successful experiment ${\sf EXP\mbox{-}Immutability_{\mathcal{G}}^{SSS}}$ of adversary $\mathcal{G}$. Note that $\sigma^*=(\sigma_1^*, \sigma_2^*,{\sf AD}^*)$, and hence $\sigma^*$ contains a valid signature $\sigma_1^*$ on $({\sf 0, {msg}^*_{fixed}, AD^*,pk^*_{sanit} })$ which verifies correctly under the public key of the signer.
		
		To show that unforgeability of the underlying secure MQ-based signature is broken, it is enough to demonstrate that this tuple $({\sf 0, {msg}^*_{fixed}, AD^*,pk^*_{sanit} })$ has not been queried before by $\mathcal{B}$ through the signing oracle of the underlying secure MQ-based signature scheme. Assume on the contrary that if $({\sf 0, {msg}^*_{fixed}, AD^*,pk^*_{sanit} })=( 0,  {\sf msg}^*_{{\sf fixed}, i}, {\sf AD}^*_i,{\sf pk}^*_{{\sf sanit},i})$ for some $i$th query then ${\sf AD}_i={\sf AD^*}$ and ${\sf Fixed_{AD}(msg^*)}={\sf Fixed_{AD}}({\sf msg}_i)$. Thus, ${\sf  msg^*}$ must be a valid modification ${\sf MODIFY}({\sf msg}_i)$ of ${\sf msg}_i$, but it contradicts equation \ref{imut3}.
		
		Therefore,  if $\mathcal{G}$ wins the experiment with success probability $\delta(\kappa)=
		{\sf Prob}[{\sf EXP\mbox{-}Immutability_{\mathcal{G}}^{SSS}}=1]$, then $\mathcal{ B}$ also wins its experiment with non-negligible success probability greater than or equal to $\delta(\kappa)$
\end{description}\end{proof}

\begin{theorem}
	The proposed {\sf Mul-SAN} provides privacy.
\end{theorem}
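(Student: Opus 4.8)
The plan is to establish privacy by a \emph{perfect indistinguishability} argument rather than a computational reduction: I will show that the entire view of any PPT adversary $\mathcal{G}$ in the experiment ${\sf EXP\mbox{-}Privacy_{\mathcal{G}}^{SSS}}$ of Figure \ref{private} is statistically independent of the challenge bit $b$, so that $\Pr[a=b]=\tfrac12$ and the advantage is $0$. The only oracle whose behaviour depends on $b$ is ${\sf LoRSanit}$; the ordinary {\sf Signature} and {\sf Sanitization} oracles are answered without any reference to $b$ and hence carry no information about it. It therefore suffices to argue that, for every admissible query, the reply of ${\sf LoRSanit}$ has a distribution that does not depend on whether $b=0$ or $b=1$.

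First I would unpack the reply of ${\sf LoRSanit}$ on a query $\big(({\sf msg}_{j,0},{\sf MODIFY}_{j,0}),({\sf msg}_{j,1},{\sf MODIFY}_{j,1}),{\sf AD}_j\big)$. By construction the oracle runs {\sf Mul-SAN.Signature} on ${\sf msg}_{j,b}$ and then {\sf Mul-SAN.Sanitization}, returning $({\sf msg}'_j,\sigma'_j)$ with $\sigma'_j=(\sigma_1,\sigma'_2,{\sf AD}_j)$. Crucially, the intermediate signature $\sigma_2$ that the signer places on the full message ${\sf msg}_{j,b}$ is \emph{discarded} by the sanitizer and never appears in the output; only $\sigma_1$ (on the fixed part) and the freshly generated sanitizer signature $\sigma'_2$ (on the sanitized message) are released. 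Using the admissibility constraint $({\sf msg}_{j,0},{\sf MODIFY}_{j,0},{\sf AD}_j)\equiv({\sf msg}_{j,1},{\sf MODIFY}_{j,1},{\sf AD}_j)$, both branches yield the \emph{same} sanitized message ${\sf msg}'_j={\sf MODIFY}_{j,0}({\sf msg}_{j,0})={\sf MODIFY}_{j,1}({\sf msg}_{j,1})$ and the \emph{same} fixed part ${\sf Fixed}_{{\sf AD}_j}({\sf msg}_{j,0})={\sf Fixed}_{{\sf AD}_j}({\sf msg}_{j,1})$.

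Next I would trace the two released components through their hash inputs. The fixed-part signature $\sigma_1$ is obtained by inverting the signer's maps on ${\sf msg}_0=\mathcal{H}(0||{\sf msg}_{\sf fix}||{\sf AD}_j||{\sf pk_{sanit}})$, and the sanitizer signature $\sigma'_2$ by inverting the sanitizer's maps on ${\sf msg}_2=\mathcal{H}(1||{\sf msg}'_j||{\sf pk_{sanit}}||{\sf pk_{sign}})$. Since ${\sf msg}_{\sf fix}$, ${\sf msg}'_j$, ${\sf AD}_j$, and both public keys are identical across the two branches, the inputs ${\sf msg}_0$ and ${\sf msg}_2$ fed to the signing procedures are identical regardless of $b$. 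Consequently the distribution of $(\sigma_1,\sigma'_2)$, taken over the internal coins of the signing routines, coincides for $b=0$ and $b=1$. Hence the reply of ${\sf LoRSanit}$ is identically distributed in the two worlds, the whole transcript is independent of $b$, and privacy follows.

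The step I expect to require the most care is the last one, namely arguing \emph{equality of distributions} rather than merely equality of a single value. Multivariate signing inverts the central map $\mathcal{F}^{-1}$ (respectively $\mathcal{X}^{-1}$), which in typical MPKC instantiations samples auxiliary random coins; I must therefore note that both branches invoke the \emph{same} randomized routine on the \emph{same} input, so the outputs are drawn from one common distribution. I would also explicitly confirm the two modelling points on which the argument rests: that the equivalence relation $\equiv$ indeed forces both the fixed part and the resulting sanitized message to match, and that the discarded intermediate $\sigma_2$ (the only quantity in the flow that genuinely depends on ${\sf msg}_{j,b}$) is never exposed to $\mathcal{G}$.
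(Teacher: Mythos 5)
Your proof is correct and takes essentially the same route as the paper: both argue that, under the constraint $({\sf msg}_{j,0},{\sf MODIFY}_{j,0},{\sf AD}_j)\equiv({\sf msg}_{j,1},{\sf MODIFY}_{j,1},{\sf AD}_j)$, the ${\sf LoRSanit}$ reply consists of $\sigma_1$ on the common fixed part together with a from-scratch sanitizer signature $\sigma_2'$ on the common sanitized message, so its distribution is identical for $b=0$ and $b=1$ and the transcript is independent of $b$. Your write-up is in fact more careful than the paper's sketch --- explicitly tracing the hash inputs ${\sf msg}_0$ and ${\sf msg}_2$, noting that the intermediate $\sigma_2$ is discarded, and addressing the randomness of the central-map inversion --- but the underlying argument is the same.
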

\begin{proof}We will prove the privacy of our proposed design by using the indistinguishability notion discussed in \cite{brzuska2009security}. An adversary $\mathcal{ G}$ works as follows. $\mathcal{ G}$ chooses two pairs ${\sf (msg_0, {\sf MODIFY}_0), (msg_1, {\sf MODIFY}_1)}$ comprising the message ${\sf msg_0}$ and ${\sf msg_1}$ and their respective modifications together with the algorithm ${\sf AD}$. $\mathcal{G}$ is allowed to make queries to {\it Left or Right Oracle} $\mathcal{O}_{\sf LoR}$. $\mathcal{O}_{\sf LoR}$ oracle chooses a bit once, and in the following either output a sanitized signature for the left tuple ${\sf (msg_0,} {\sf MODIFY}_0)$ $(b=0)$ or sanitized signature for the right tuple $({\sf msg_1}, {\sf MODIFY}_1)$ $(b=1)$. The adversary $\mathcal{G}$ now predicts the random bit $b$. Note that for each query to $\mathcal{O}_{\sf LoR}$ if the resulting modified message is not identical for both left and right tuple and the modification made does not match ${\sf AD}$, the task of guessing the bit is straightforward. Therefore, we put an additional condition that modified messages {\sf MODIFY}$_0$, and {\sf MODIFY}$_1$ are identical for both of the tuples. We can now easily see that the output distribution of $\mathcal{O}_{\sf LoR}$ is identical for both values of the bit $b$. This is due to two reasons: firstly, the constraint we put earlier ensures that both tuples comprising of the message and the modification result in the same outcome, and secondly, the sanitiser does the signature on the modified message from scratch. Thus, we conclude that {\sf Mul-SAN} satisfies privacy.
\end{proof}

\begin{theorem}\label{thm:sanacc}
	The proposed sanitizable signature scheme {\sf Mul-SAN} is sanitizer-accountable, presuming that the underlying MQ-based signature scheme satisfies unforgeability.
\end{theorem}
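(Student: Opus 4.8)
The plan is to reduce sanitizer accountability of {\sf Mul-SAN} to the unforgeability of the underlying MQ-based signature, following the same template as the immutability proof above. I would assume, toward a contradiction, a PPT adversary $\mathcal{G}_{\sf sanit}$ that wins ${\sf EXP\mbox{-}San\mbox{-}Acc}$ with non-negligible probability $\delta(\kappa)$, and build a forger $\mathcal{B}$ against the MQ-signature. Here $\mathcal{B}$ receives a challenge public key $\mathcal{P}$ together with access to the associated signing oracle; it sets ${\sf pk_{sign}}=\mathcal{P}$ and invokes $\mathcal{G}_{\sf sanit}({\sf pk_{sign}})$, so that the signer whose accountability is at stake is exactly the party against whose signing key $\mathcal{B}$ is trying to forge.

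Next I would specify how $\mathcal{B}$ answers $\mathcal{G}_{\sf sanit}$'s queries to ${\sf Signature}(\cdot,{\sf sk_{sign}},\cdot,\cdot)$. On the $i$-th query $({\sf msg}_i,{\sf AD}_i,{\sf pk}_{{\sf sanit},i})$, $\mathcal{B}$ computes ${\sf msg}_{{\sf fix},i}={\sf FIXED}_{{\sf AD}_i}({\sf msg}_i)$, forwards the two structured inputs $0||{\sf msg}_{{\sf fix},i}||{\sf AD}_i||{\sf pk}_{{\sf sanit},i}$ and $1||{\sf msg}_i||{\sf pk}_{{\sf sanit},i}||{\sf pk_{sign}}$ to its own signing oracle (which applies $\mathcal{H}$ and signs), collects the replies $\sigma_{i,1}$ and $\sigma_{i,2}$, and returns $\sigma_i=(\sigma_{i,1},\sigma_{i,2},{\sf AD}_i)$. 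Because $\mathcal{B}$ produces every signature exactly as the honest ${\sf Mul\mbox{-}SAN.Signature}$ algorithm would, this is a perfect simulation of $\mathcal{G}_{\sf sanit}$'s environment.

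The heart of the proof is the extraction step. When $\mathcal{G}_{\sf sanit}$ halts with a winning tuple $({\sf pk}^*_{\sf sanit},{\sf msg}^*,\sigma^*)$, write $\sigma^*=(\sigma_1^*,\sigma_2^*,{\sf AD}^*)$. The winning condition ${\sf Judge}={\sf Sig}$ (the judge is made to attribute the pair to the signer) means, by the definition of ${\sf Mul\mbox{-}SAN.Judge}$, that $\mathcal{P}(\sigma_2^*)={\sf msg}_1^*$ with ${\sf msg}_1^*=\mathcal{H}(1||{\sf msg}^*||{\sf pk}^*_{\sf sanit}||{\sf pk_{sign}})$; that is, $\sigma_2^*$ is a valid MQ-signature on ${\sf msg}_1^*$ under the signer's key. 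Consequently $\mathcal{B}$ outputs $(1||{\sf msg}^*||{\sf pk}^*_{\sf sanit}||{\sf pk_{sign}},\,\sigma_2^*)$ as its forgery.

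Freshness is the step I expect to be the main obstacle. I must show that $\mathcal{B}$ never asked its signing oracle to sign $1||{\sf msg}^*||{\sf pk}^*_{\sf sanit}||{\sf pk_{sign}}$. The game's winning condition enforces $({\sf pk}^*_{\sf sanit},{\sf msg}^*)\neq({\sf pk}_{{\sf sanit},i},{\sf msg}_i)$ for every query $i$, so this structured input differs from each full-message query $1||{\sf msg}_i||{\sf pk}_{{\sf sanit},i}||{\sf pk_{sign}}$; the leading bit $1$ further separates it from every fixed-part query, which carries the prefix $0$, so no $\sigma_{i,1}$ query can coincide either. If the unforgeability notion is phrased over hash images rather than over structured inputs, collision resistance of $\mathcal{H}$ bridges the gap at the cost of a negligible term. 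Hence $\mathcal{B}$'s output is a fresh valid forgery, contradicting unforgeability, and $\mathcal{B}$ succeeds with probability at least $\delta(\kappa)$ up to a negligible collision term. The one delicate bookkeeping point I would be careful about is the index/notation ambiguity in the stated freshness condition of ${\sf EXP\mbox{-}San\mbox{-}Acc}$, verifying that it genuinely excludes every oracle answer that could render ${\sf msg}_1^*$ non-fresh.
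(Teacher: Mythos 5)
Your proposal is correct and follows essentially the same reduction as the paper: simulate the {\sf Signature} oracle via the underlying MQ-signing oracle, then argue freshness of the 1-prefixed full message from the winning condition $({\sf pk}^*_{\sf sanit},{\sf msg}^*)\neq({\sf pk}_{{\sf sanit},i},{\sf msg}_i)$ together with the $0$/$1$ bit domain separation. In fact your extraction step is the more careful one: the paper's reduction text literally has $\mathcal{B}$ return $({\sf msg^*_{fixed}}, {\sf AD^*}, {\sf pk_{sign}}, {\sf pk^*_{sanit}}, \sigma_1^*)$ --- a tuple copied verbatim from the immutability proof --- whereas its own analysis (and your proof) correctly treats the forgery as $\sigma_2^*$ on $(1||{\sf msg}^*||{\sf pk}^*_{\sf sanit}||{\sf pk_{sign}})$.
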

\begin{proof}
	We assume, on the contrary, that our proposed design {\sf Mul-SAN} is not signer-accountable. Suppose that there exists an polynomial time adversary $\mathcal{G}_{\sf sanit}$ such that the advantage $\delta(\kappa)=
	{\sf Prob}[{\sf EXP\mbox{-}San\mbox{-}Acc_{\mathcal{G}_{sanit}}^{SSS}}]$ is non-negligible. We show how to build an adversary $\mathcal{B}$, which breaks the unforgeability of the underlying secure MQ-based signature scheme. Note that $\mathcal{B}$ has oracle access to a signing oracle of the underlying secure multivariate-based signature.
	
	\begin{description}
		\item {$\mathcal{ B}'s$ construction}: $\mathcal{ B}$ receives the public key ${\sf pk_{sign}}$ and runs  $ ({\sf pk^*_{sanit}, {\sf msg^*}, \sigma^*})\leftarrow \mathcal{G}_{\sf sanit}({\sf pk_{sign}})$. During the experiment, ${\cal B}$ simulates the  oracle ${\sf Signature(\cdot, sk_{sign}, \cdot, \cdot)}$ to $\mathcal{G}_{\sf sanit }$ as follows:
		\item {\sf Signature$(\cdot, {\sf sk_{sign}}, \cdot, \cdot)$}: On the $i$th input $({\sf msg}_{i}, {\sf AD}_i, {\sf pk}_{{\sf sanit},i})$, $\mathcal{ B}$ first computes the fixed message part ${\sf msg_{fixed}} \leftarrow {\sf  FIXED}_{{\sf AD}_i}({\sf msg}_i)$. It then sends $({\sf msg}_i, {\sf AD}_i, {\sf pk_{sign}}, {\sf pk_{sanit}}_i)$ to a signing oracle of the underlying secure MQ-based signature oracle and receives the signature $\sigma_{i,1}$. In the next step, it sends $({\sf msg}_{i}, {\sf pk}_{{\sf sanit},i}, {\sf pk}_{{\sf sign},i})$ to a signing oracle of the underlying MQ-based signature to get back the signature ${\sigma_{i,2}}$ on the full message. It returns $\sigma=(\sigma_{i,1}, \sigma_{i,2}, {\sf AD}_i)$.
	\end{description}
	
	Finally $\mathcal{ B}$ parses $\sigma^*=(\sigma_1^*,\sigma_2^*, {\sf AD}^*)$, ${\sf msg^*_{fixed}}={\sf FIXED_{AD^*}(msg^*)}$ and returns the tuple $({\sf msg^*_{fixed}}, {\sf AD^*, pk_{sign}, pk^*_{sanit}, \sigma_1^*})$.
	
	\begin{description}
		\item Analyze: We show the if {$\mathcal{ G}_{\sf sanit}$} wins its experiment, then $\mathcal{ B}$ also wins its experiment. If $\mathcal{ G}_{\sf sanit }$ wins its experiment it means $\mathcal{G}_{\sf sanit}$ successfully outputs $({\sf msg^*, \sigma^*, pk^*_{sanit}})$ with

		\begin{align}
		({\sf pk_{sanit}^*}, {\sf msg^*})\neq ({\sf pk}_{{\sf sanit},i}, {\sf msg'}_j) \forall i=1,2,\dots, q \label{sanacc1}\\
		{\sf Verification({\sf msg^{*},\sigma^{*}, pk_{sign},pk*_{sanit}})=true}\label{sanacc2}\\
		\mbox{and } {\sf Judge({\sf msg^{*},\sigma^{*}, pk_{sign},pk*_{sanit}})=Sig}\label{sanacc3}
		\end{align}
		By equation \ref{sanacc3}, we conclude that {\sf Judge} inspects $\sigma^*_2$ and checks that $\sigma^*_2$ is a correct signer's signature on $( {\sf 1, {msg}^*, AD^*,pk^*_{sanit}, pk_{sign} })$. But since by equation \ref{sanacc1}, ${\sf pk_{sanit}^*}, {\sf msg^*})\neq ({\sf pk}_{{\sf sanit},i}, {\sf msg'}_j) $ for all $i$ and as all signatures for the fixed part are signatures over the messages prepended with a 0-bit, it follows that $( {\sf 1, {msg}^*, AD^*,pk^*_{sanit}, pk_{sign} })$ has not been signed before.

		Therefore if $\mathcal{G}_{\sf sanit}$ wins the experiment with success probability $\delta(\kappa)=
		{\sf Prob}[{\sf EXP\mbox{-}Sanitizer\mbox{-}Acc{\mathcal{G}}^{SSS}}=1]$, then $\mathcal{ B}$ also wins its experiment with non-negligible success probability greater than or equal to $\delta(\kappa)$
\end{description}\end{proof}

We will now show that our proposed design is signed and accountable. The proof is very similar to Theorem \ref{thm:sanacc}, so we present a sketch of the proof.
\begin{theorem}\label{thm:signacc}
	The proposed sanitizable signature scheme {\sf Mul-SAN} satisfies the security property of signer-accountability if the underlying MQ-based signature scheme  employed in the sanitization algorithm is unforgeable.
\end{theorem}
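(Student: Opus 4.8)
The plan is to mirror the reduction of Theorem~\ref{thm:sanacc}, swapping the roles of the signer and the sanitizer. Since signer-accountability protects the sanitizer against a malicious signer, any winning adversary $\mathcal{G}_{\sf sign}$ must produce a tuple on which {\sf Mul-SAN.Verification} succeeds yet {\sf Mul-SAN.Judge} outputs {\sf San}, even though the honest sanitizer never created it. I would turn such a tuple into a forgery against the sanitizer's underlying MQ-based signature, whose public key is $\mathcal{R}=\mathcal{Q}\circ\mathcal{X}\circ\mathcal{Y}$. Concretely, $\mathcal{B}$ receives a challenge public key ${\sf pk_{sanit}}$ together with oracle access to the sanitizer's signing oracle, and then runs $({\sf pk^*_{sign}}, {\sf msg^*}, \sigma^*)\leftarrow \mathcal{G}_{\sf sign}({\sf pk_{sanit}})$.

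First I would argue that $\mathcal{B}$ can perfectly simulate the {\sf Sanitization} oracle to $\mathcal{G}_{\sf sign}$ without knowing ${\sf sk_{sanit}}$. On the $j$th query $({\sf msg}_j, {\sf MODIFY}_j, \sigma_j, {\sf pk}_{{\sf sign},j})$, $\mathcal{B}$ performs the admissibility test and the verification of $\sigma_1$ exactly as in the honest algorithm, computes ${\sf msg'}_j\leftarrow {\sf MODIFY}_j({\sf msg}_j)$, forms the digest ${\sf msg}_2^{(j)}=\mathcal{H}(1\|{\sf msg'}_j\|{\sf pk_{sanit}}\|{\sf pk}_{{\sf sign},j})$, forwards ${\sf msg}_2^{(j)}$ to its signing oracle to obtain $\sigma'_2$, and returns $({\sf msg'}_j, (\sigma_1, \sigma'_2, {\sf AD}))$. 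Since the only point at which honest {\sf Mul-SAN.Sanitization} uses ${\sf sk_{sanit}}$ is to invert $\mathcal{Q},\mathcal{X},\mathcal{Y}$ on ${\sf msg}_2^{(j)}$, the simulation is identically distributed to the real oracle. I would also note the structural convenience that the sanitizer's oracle is \emph{only} ever queried on $1$-prefixed digests, so no $0$-prefixed fixed-part signatures enter the reduction on the sanitizer side.

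When $\mathcal{G}_{\sf sign}$ halts with $\sigma^*=(\sigma_1^*, \sigma_2^*, {\sf AD}^*)$, $\mathcal{B}$ sets ${\sf msg}_2^*=\mathcal{H}(1\|{\sf msg^*}\|{\sf pk_{sanit}}\|{\sf pk^*_{sign}})$ and outputs $({\sf msg}_2^*, \sigma_2^*)$ as its forgery. The correctness of the extraction rests on the definition of {\sf Mul-SAN.Judge}: a winning $\mathcal{G}_{\sf sign}$ forces {\sf Verification}=true and {\sf Judge}={\sf San}, and {\sf San} is output precisely when $\mathcal{P}(\sigma_2^*)\neq {\sf msg}_1^*$; hence the verification can only have succeeded through its second branch $\mathcal{R}(\sigma_2^*)={\sf msg}_2^*$, so $\sigma_2^*$ is a genuine sanitizer signature on ${\sf msg}_2^*$.

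The main obstacle is the freshness argument: I must show ${\sf msg}_2^*$ was never submitted to the sanitizer's signing oracle during the simulation, so that $({\sf msg}_2^*, \sigma_2^*)$ is a legitimate forgery. This is exactly where the winning condition $({\sf pk^*_{sign}}, {\sf msg^*})\neq ({\sf pk}_{{\sf sign},j}, {\sf msg'}_j)$ for every query $j$ is essential. If ${\sf msg}_2^*={\sf msg}_2^{(j)}$ for some $j$, then since both digests carry the $1$-bit domain separator, collision resistance of $\mathcal{H}$ would force ${\sf msg^*}={\sf msg'}_j$ and ${\sf pk^*_{sign}}={\sf pk}_{{\sf sign},j}$, contradicting that condition. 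Therefore ${\sf msg}_2^*$ is fresh, $\mathcal{B}$ succeeds whenever $\mathcal{G}_{\sf sign}$ does, and its advantage is at least $\delta(\kappa)={\sf Prob}[{\sf EXP\mbox{-}Signer\mbox{-}Acc_{\mathcal{G}_{sign}}^{SSS}}=1]$, contradicting the unforgeability of the sanitizer's MQ-based signature and completing the sketch.
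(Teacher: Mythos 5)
Your proposal is correct and follows essentially the same route as the paper's (sketched) proof: a reduction to unforgeability of the sanitizer's underlying MQ-based signature, in which {\sf Judge}$=${\sf San} forces verification to succeed via $\mathcal{R}(\sigma_2^*)={\sf msg}_2^*$, the $1$-bit domain separation keeps fixed-part signatures out of play, and the winning condition $({\sf pk^*_{sign}},{\sf msg^*})\neq({\sf pk}_{{\sf sign},j},{\sf msg'}_j)$ supplies freshness. You merely spell out details the paper leaves implicit --- the explicit simulation of the {\sf Sanitization} oracle using only the signing oracle, and the collision-resistance step needed because the forgery target is the hash digest rather than the tuple itself --- which is a faithful expansion of the paper's sketch, not a different approach.
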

\begin{proof}
	Let $\mathcal{G}_{\sf signer}$ be an adversary having access to the oracle {\sf Sanitization} with non-negligible success probability in the Game \ref{signeracc}. Let $({\sf pk^*_{sign}, {\sf msg^*}, \sigma^*})$ be a valid forgery against the signer accountability. It implies that (${\sf pk^*_{sign}}, {\sf msg^*})\neq ({\sf pk}_{{\sf sign},i}, {\sf msg'}_i)$ for all queries $({\sf msg'}_{i}, {\sf \sigma'}_i, {\sf pk}_{{\sf sanit},i})$ to the oracle {\sf Sanitization}. In addition, the three-tuple $({\sf pk^*_{sign}, {\sf msg^*}, \sigma^*})$ is such that {\sf Judge} looks at $\sigma^*_2$ and checks the validity of $\sigma^*_2$ for the message ($1, {\sf msg^*,\sigma^*, pk^*_{sanit},pk_{sign}}$). Note that a sanitizer only signs the message prepended with the bit $1$. In addition, we can observe that (${\sf pk^*_{sign}}, {\sf msg^*})\neq ({\sf pk}_{{\sf sign},i}, {\sf msg'}_i)$ for all the queries. Therefore, we can conclude that ${\sf msg^*}$ was not inputted by the sanitizer in the {\sf Mul-SAN.Sanitization} before. This implies the underlying multivariate-based signature scheme employed in the {\sf Mul-SAN.Sanitization} algorithm is not unforgeable.
\end{proof}
\begin{theorem}
	The proposed sanitizable signature scheme {\sf Mul-SAN} provides unforgeability under the assumption that the underlying MQ-based signature scheme is unforgeable.
\end{theorem}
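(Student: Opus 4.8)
The plan is to prove unforgeability by reduction to the unforgeability of the underlying MQ-based signature scheme, mirroring the structure of the accountability proofs already established. I assume for contradiction that there is a PPT adversary $\mathcal{G}$ winning ${\sf EXP\mbox{-}Unforgeability_{\mathcal{G}}^{SSS}}$ with non-negligible advantage $\delta(\kappa)$. Observe that in the unforgeability game $\mathcal{G}$ has access to both a {\sf Signature} oracle (holding ${\sf sk_{sign}}$) and a {\sf Sanitization} oracle (holding ${\sf sk_{sanit}}$), and a valid forgery $({\sf msg}^*, \sigma^*)$ must verify while being fresh with respect to both oracles: the pair $({\sf pk}^*_{\sf sanit}, {\sf msg}^*)$ differs from every signer-query answer and $({\sf pk_{sign}}, {\sf msg}^*)$ differs from every sanitizer-query answer. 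The key structural fact I would exploit is that the {\sf Mul-SAN.Verification} accepts $\sigma^* = (\sigma_1^*, \sigma_2^*, {\sf AD}^*)$ only if $\sigma_1^*$ verifies as a signer signature on the $0$-prepended fixed part and $\sigma_2^*$ verifies either as a signer signature (under $\mathcal{P}$) on the $1$-prepended full message or as a sanitizer signature (under $\mathcal{R}$) on it.

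First I would build a reduction $\mathcal{B}$ that attacks the underlying MQ signature. Since a forgery can be against \emph{either} the signer key or the sanitizer key, I would split the argument into two cases according to which verification equation in step~4 of {\sf Mul-SAN.Verification} the forged $\sigma_2^*$ satisfies. In the first case, $\mathcal{P}(\sigma_2^*) = {\sf msg}_1^*$, so $\sigma_2^*$ is a valid signer signature on $({\sf 1}, {\sf msg}^*, {\sf pk_{sanit}}, {\sf pk_{sign}})$; here $\mathcal{B}$ embeds its challenge public key as ${\sf pk_{sign}}$, answers all {\sf Signature}-oracle queries by relaying to its own signing oracle (and generates ${\sf sk_{sanit}}$ itself so it can answer {\sf Sanitization} queries directly), and finally outputs $\sigma_2^*$ as its forgery on the full-message hash. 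In the second case, $\mathcal{R}(\sigma_2^*) = {\sf msg}_2^*$, so $\sigma_2^*$ is a valid sanitizer signature; now $\mathcal{B}$ embeds its challenge as ${\sf pk_{sanit}}$, relays {\sf Sanitization} queries to its own oracle, generates ${\sf sk_{sign}}$ itself to answer {\sf Signature} queries, and outputs $\sigma_2^*$ as its forgery.

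The crux of the argument, in both cases, is to verify that the message underlying $\sigma_2^*$ was \emph{never} submitted to $\mathcal{B}$'s own signing oracle, so that $\sigma_2^*$ is a genuine forgery against the MQ scheme. This follows from the freshness conditions of the game combined with the domain-separation provided by the prepended bit and by the inclusion of the public keys inside the hash input: because $({\sf pk}_{\sf sign}, {\sf msg}^*)$ (respectively $({\sf pk}^*_{\sf sanit}, {\sf msg}^*)$) differs from every corresponding oracle answer, the hash value ${\sf msg}_1^* = \mathcal{H}(1 \| {\sf msg}^* \| {\sf pk_{sanit}} \| {\sf pk_{sign}})$ (respectively ${\sf msg}_2^*$) was never queried, and collision resistance of $\mathcal{H}$ rules out a spurious match. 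I would then conclude that $\mathcal{B}$ wins with probability at least $\delta(\kappa)$ up to the negligible hash-collision term, contradicting the assumed unforgeability of the MQ signature.

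The main obstacle I anticipate is the case analysis on which key the forgery targets and the accompanying need to simulate \emph{both} oracles consistently: in each case $\mathcal{B}$ embeds the challenge key on one side and must hold the honestly generated secret key on the other side so that the oracle it is not relaying can still be answered perfectly. A subtle point is ensuring the simulation is distributionally identical to the real game despite this asymmetry, and confirming that a forgery of the first type genuinely cannot be re-interpreted as one already obtainable from the simulated oracle — this is where the explicit bit-prefix and key-binding in the hash inputs do the essential work of separating the signer's and sanitizer's signing domains.
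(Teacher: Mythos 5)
Your proposal is correct in substance, but it takes a genuinely different route from the paper. The paper disposes of this theorem in two lines: it invokes the generic result of Brzuska et al.~\cite{brzuska2009security} that any SSS satisfying both signer accountability and sanitizer accountability is automatically unforgeable, and then simply appeals to Theorems~\ref{thm:signacc} and~\ref{thm:sanacc}. Your direct reduction, by contrast, re-derives that implication concretely for {\sf Mul-SAN}: your Case~1 (forged $\sigma_2^*$ verifying under $\mathcal{P}$, challenge key embedded as ${\sf pk_{sign}}$) is essentially the sanitizer-accountability reduction, and your Case~2 (forged $\sigma_2^*$ verifying under $\mathcal{R}$) is essentially the signer-accountability reduction; the freshness conditions of the unforgeability game are exactly what make the hash inputs $\mathcal{H}(1\|{\sf msg^*}\|{\sf pk_{sanit}}\|{\sf pk_{sign}})$ new in each case, and your use of the $0$/$1$ bit-prefix and the key-binding inside the hash is the same domain-separation mechanism on which the paper's accountability proofs rest. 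What the paper's route buys is brevity and modularity --- the accountability theorems were proved precisely so that unforgeability comes for free; what your route buys is a self-contained argument with explicit concrete security that does not lean on the external generic lemma, and your observation that $\mathcal{B}$ must hold the honestly generated secret key on the non-challenge side to simulate the second oracle perfectly is exactly the right simulation discipline. One bookkeeping correction: since $\mathcal{B}$ cannot know in advance which verification equation the forgery will satisfy, it must either guess which key to embed or be run as two parallel reductions, so the advantage you can claim is $\delta(\kappa)/2$ up to the hash-collision term, not $\delta(\kappa)$; this constant-factor loss is harmless but should be stated, and it is precisely the loss hidden inside the generic accountability-implies-unforgeability argument the paper cites.
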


\begin{proof}
	In \cite{brzuska2009security}, authors showed that unforgeability is implied if a SSS satisfies both signer and sanitizer accountability. Thus, our proposed design is unforgeable by Theorem \ref{thm:signacc} and \ref{thm:sanacc}.
\end{proof}

\section{Proposed Blockchain-Envisioned Solution For Audit Log Mechanism}

We now explore the possible application of {\sf Mul-SAN} in the Blockchain-enabled audit log mechanism. 
An enterprise can sign audit logs at the moment of generation and record this first signature on a Blockchain. A trusted person (say sanitizer) inside the organization can play the role of the sanitizer. Later, if the enterprise wishes to redact/modify a subset of segments in the audit logs, the sanitizer may utilize the {\sf Mul-SAN.Sanitization} to update the initial signature on the Blockchain. The Blockchain maintains an immutable log of the segments that were modified. When a receiver receives the logs, they can confirm the integrity of the received material at any given moment (after content creation) using the underlying verification algorithm of {\sf Mul-SAN}. In other words, they can determine which segments have been changed and ensure the sanitizer hasn't altered the remaining segments. Within the Blockchain network, receiver and enterprise-specific nodes are designated. A receiver submits a challenge to the company after needing to verify the integrity of the data. Next, in response to the received challenge, the company creates an auditing proof, which it then broadcasts to the Blockchain network. The representative nodes then bundle the auditing evidence into a new block, which they then broadcast by interacting with other nodes. Finally, the receiver can obtain the auditing proof from Blockchain and verify the auditing proof.

Blockchain offers several key advantages in integrity auditing:

\begin{itemize}
\item Decentralization: Blockchain operates as a distributed ledger maintained by miner nodes rather than relying on a central authority. This allows the assignment of integrity auditing tasks to various miners, reducing reliance on a single Trustworthy Third-Party Auditor (TPA) and minimizing the risk of a single point of failure.

\item Transparency and Tamer-Resistance: Records within the Blockchain are publicly accessible, enabling all miners and authorized parties to view and authenticate auditing outcomes. Each block in the Blockchain contains the hash value of its preceding block; therefore, any tampering activity can be prevented.

\item Non-repudiation: Blockchain maintains comprehensive operation logs and auditing histories, enabling the receiver to scrutinize these logs for potentially malicious behaviour.

\end{itemize}

\section{Efficiency Analysis\label{efficiency}}

We now present the storage and communication complexity of our proposed ${\sf Mul\mbox{-}SAN}$. The size of the public key of both the signer and sanitiser is $\frac{m(n+2)(n+1)}{2}$ field ($\mathbb{F}_q$) elements. $n^2+m^2+C$ field ($\mathbb{F}_q$) elements give the private key size. Here, $m$ denotes the number of equations, $n$ denotes the number of variables, and $C$ denote the size of the central map of the underlying secure MQ-based signature. We now have a look at the signature size. The signature size is $2n$ field ($\mathbb{F}_q$) elements. Table \ref{table:1} summarises the communication and storage complexity of {\sf Mul-SAN}.

\begin{table}[h!]
	\centering
	\caption{Summary of Communication and Storage Overheads of {\sf Mul-SAN}}

	\begin{tabular}{|l|l|}
		\hline
		Size of ${\sf pk_{sign}}$                                                         & $\frac{m(n+2)(n+1)}{2}$ field $ (\mathbb{F}_q)$ elements \\ \hline
		Size of  ${\sf sk_{sign}}$                                                        & $n^2+m^2+C$ field $(\mathbb{F}_q)$ elements  \\ \hline
		Size of  ${\sf pk_{sanit}}$                                                          & $\frac{m(n+2)(n+1)}{2}$ field $ (\mathbb{F}_q)$ elements  \\ \hline
		Size of  ${\sf sk_{sign}}$                                                  &  $n^2+m^2+C$ field $(\mathbb{F}_q)$ elements  \\ \hline
		Signature size                                                        & $2n$ field ($\mathbb{F}_q$) elements   \\ \hline
		Size of the sanitized signature & $2n$ field ($\mathbb{F}_q$) elements\\ \hline
		
	\end{tabular}
	
	\label{table:1}
\end{table}

To the extent of our knowledge, our proposed design {\sf Mul-SAN} is the \textit{first} post-quantum secure sanitizable signature. {\sf Mul-SAN} is also the {\it first} multivariate-based sanitizable signature in the current state of the art. Without other multivariate-based SSS, we compare our proposed design with other non-multivariate-based SSS. We instantiate {\sf Mul-SAN} with UOV signature scheme with parameters $(n=160,m = 64)$ \cite{beullens2023oil}. The chosen parameter set offers $128$-bit security over the field $GF(16)$. We refer to Table \ref{table:2} for a comparative summary.

\begin{table}[h!]
	\centering
	\caption{Comparison for $128$-bit security level over the field $GF(16)$}
	\adjustbox{scale=0.70}{%
		\begin{tabular}{|p{4cm}|p{1cm}|p{1cm}|p{1cm}|p{1cm}|p{1cm}|}
			\hline
			Scheme                                                                                      & Size of ${\sf sk_{sign}}$ (in Kb)  & Size of ${\sf sk_{sanit}}$ (in Kb)  &    Signer's signature size (in Kb) & Sanitizer's signature size (in Kb) & Post-quantum security       \\ \hline
			
			\begin{tabular}[c]{@{}c@{}} Bossuat2021 \cite{bossuat2021unlinkable}\\ \end{tabular}        &       0.12           & 0.12  & 1.89& 1.89 & $\times$\\ \hline
			\begin{tabular}[c]{@{}c@{}} Brzuska et al. \cite{brzuska2010unlinkability}\\ \end{tabular}         &       0.06           & 0.06  & 87.40 & 87.40 & $\times$\\ \hline
			\begin{tabular}[c]{@{}c@{}} Bultel2019 \cite{bultel2019efficient}\end{tabular}                                               & 0.12                  & 0.12   &1.35 &1.35& $\times$ \\ \hline
			\begin{tabular}[c]{@{}c@{}} Lai et al. \cite{lai2016efficient}\\ \end{tabular}                                             &                  0.32           & 0.10  & 55.71& 55.71& $\times$\\ \hline
			\begin{tabular}[c]{@{}c@{}}{\sf Mul-SAN } \\ \end{tabular}          &       2789.632         & 2789.632 & 0.77 &   0.77 &  $\checkmark$ \\ \hline

		\end{tabular}
	}
	\label{table:2}
\end{table}

We now discuss the findings of our comparative analysis. We can observe that our scheme performs better in signature size of both the signer and sanitizer when compared to the other schemes \cite{bossuat2021unlinkable,bultel2019efficient,lai2016efficient,brzuska2010unlinkability}. It is a well-known disadvantage with the MPKC-based scheme \cite{ding2020multivariate} that the size of the keys is enormous compared to classical cryptosystems like RSA or ECC. However, MPKC-based schemes are swift and can be efficiently employed on memory-constraint devices \cite{shim2020high,ferozpuri2018high,yi2018high,yi2018under,yi2019rainbow,nakkar2017fault,nakkar2017secure}. In addition, they are straightforward to implement. The only arithmetic operations required in MPKC are multiplications and additions over a finite field.

The second point to note is that unlike other non-multivariate based SSS \cite{bossuat2021unlinkable,brzuska2010unlinkability,bultel2019efficient,lai2016efficient}, our design is post-quantum secure. {\sf Mul-SAN} offers security against attacks by quantum computers. Schemes in \cite{bultel2019efficient,brzuska2010unlinkability,bossuat2021unlinkable,lai2016efficient} base their security on number-theoretic hardness assumptions. These schemes are insecure when an adversary gets access to large-scale quantum computers. Thus, schemes in \cite{bultel2019efficient,brzuska2010unlinkability,bossuat2021unlinkable,lai2016efficient} are not secure and will soon become obsolete. This gives our proposed design an edge over the existing SSS. 

\section{Conclusion}
Audit logs are significant in transparently tracking system events within corporate organizations and enterprise systems. However, scenarios often arise where these logs contain sensitive information or become excessively large. Dealing with the entirety of such data becomes impractical, making it more feasible to handle subsets of it. To address these challenges securely, we proposed a post-quantum secure multivariate-based SSS named {\sf Mul-SAN}. This design guarantees unforgeability, privacy, immutability, and signer and sanitizer accountability, contingent upon the assumption that solving the $MQ$ problem is NP-hard. Additionally, we explored the utilization of Blockchain in a tamper-proof audit log mechanism system.\\\\

\noindent \textbf{Declaration of competing interest}\\
The authors declare that they have no known competing financial interests or personal relationships that could have appeared to influence the work reported in this paper.\\
\noindent \textbf{Acknowledgment}\\
This work was supported by CEFIPRA CSRP project number 6701-1.\\
\noindent \textbf{Data availability}\\
Data sharing is not applicable to this article as no new data were generated or analyzed
to support this research.

\bibliographystyle{acm}
\bibliography{ref}

\end{document}